\newtheorem{theorem}{Theorem}
\newtheorem{lemma}{Lemma}
\theoremstyle{definition}
\newtheorem{assumption}{Assumption}
\theoremstyle{remark}
\DeclareSymbolFont{sansops}{OT1}{\sfdefault}{m}{n}
\renewcommand\operator@font{\mathgroup\symsansops}
\DeclareSymbolFont{sfoperators}{OT1}{cmss}{m}{n}
\DeclareSymbolFontAlphabet{\mathsf}{sfoperators}
\def\operator@font{\mathgroup\symsfoperators}
\newcommand{\R}{\mathbb{R}}
\newcommand{\E}{\operatorname{ E}}
\newcommand{\vct}[1]{\boldsymbol{#1}}
\newcommand{\mtx}[1]{\boldsymbol{#1}}
\newcommand{\diag}{\operatorname{diag}}
\newcommand{\<}{\langle}
\renewcommand{\>}{\rangle}
\newcommand{\T}{\top}
\newcommand{\ip}[2]{\left\<#1, #2\right\>}
\newcommand{\norm}[1]{\left|\left|#1\right|\right|}
\newcommand{\Expec}[2][]{\E_{#1}\left[#2\right]}
\newcommand{\p}[1]{\left(#1\right)}
\renewcommand{\c}[1]{\left\{#1\right\}}
\newcommand{\set}[1]{\mathcal{#1}}
\newcommand{\linop}[1]{\mathcal{#1}}    
\DeclareMathOperator*{\argmin}{\operatorname{arg~min}}
\newcommand{\va}{\vct{a}}
\newcommand{\ve}{\vct{e}}
\newcommand{\vg}{\vct{g}}
\newcommand{\vp}{\vct{p}}
\newcommand{\vq}{\vct{q}}
\newcommand{\vr}{\vct{r}}
\newcommand{\vu}{\vct{u}}
\renewcommand{\vv}{\vct{v}}
\newcommand{\vw}{\vct{w}}
\newcommand{\vx}{\vct{x}}
\newcommand{\vz}{\vct{z}}
\newcommand{\vone}{\vct{1}}
\newcommand{\mA}{\mtx{A}}
\newcommand{\mC}{\mtx{C}}
\newcommand{\mE}{\mtx{E}}
\newcommand{\mP}{\mtx{P}}
\newcommand{\mR}{\mtx{R}}
\newcommand{\mV}{\mtx{V}}
\newcommand{\mX}{\mtx{X}}
\newcommand{\mY}{\mtx{Y}}
\newcommand{\mZ}{\mtx{Z}}
\newcommand{\mId}{\mathbf{I}}
\newcommand{\loQ}{\linop{Q}}
\newcommand{\setD}{\set{D}}
\newcommand{\setK}{\set{K}}
\newcommand{\setN}{\set{N}}
\newcommand{\setO}{\set{O}}
\newcommand{\setQ}{\set{Q}}
\newcommand{\setR}{\set{R}}
\newcommand{\setT}{\set{T}}
\newcommand{\sgn}{\operatorname{sign}}
\newcommand{\dist}[1]{\operatorname{\sf #1}}
\newcommand{\dUnif}{\dist{Uniform}}
\newcommand{\floor}[1]{\left\lfloor #1 \right\rfloor}
\title{Error Bounds for Radial Network Topology Learning from Quantized Measurements}
\author{Samuel Talkington, Aditya Rangarajan, Pedro A. de Alc\^antara, \\Line Roald, Daniel K. Molzahn, Daniel R. Fuhrmann 
\thanks{The authors gratefully acknowledge funding from PSERC project T-67: Smart Meter-Driven Distribution Grid Visibility and Control. The work of S. Talkington is supported by the National Science Foundation Graduate Research Fellowship Program under Grant No. DGE-1650044. 
Any opinions, findings, and conclusions or recommendations expressed in this material are those of the author(s) and do not necessarily reflect the views of the National Science Foundation.
}
}
\begin{document}

\maketitle
\begin{abstract}
   We probabilistically bound the error of a solution to a radial network topology learning problem where both connectivity and line parameters are estimated. In our model, data errors are introduced by the precision of the sensors, i.e., quantization. This produces a nonlinear measurement model that embeds the \textit{operation of the sensor communication network} into the learning problem, expanding beyond the additive noise models typically seen in power system estimation algorithms. We show that the error of a learned radial network topology is proportional to the quantization bin width and grows sublinearly in the number of nodes, provided that the number of samples per node is logarithmic in the number of nodes.
\end{abstract}

\vspace{-1em}
\section{Introduction}
\label{sec:intro}



Efficiently allocating \textit{limited smart meter bandwidth} is an emerging challenge at the intersection of power and communication engineering \cite{smart_meter_pinging,rangarajan_gupta_molzahn_roald-forecast_dsse}. As smart meters proliferate, power engineers will increasingly rely on analog-to-digital \textit{quantization} methods to improve computational and communication efficiency by mapping continuous measurements to discrete intervals \cite{gray_quantization_survey_1998}; see Fig.~\ref{fig:topology-learning-highlevel} (left). While coarse quantization accelerates computations \cite{gholami2022survey}, it introduces \textit{nonlinear}, typically \textit{non-Gaussian}, measurement noise \cite{plan_generalized_2016,thrampoulidis_quantized_lasso_2020}, contrasting with the additive Gaussian assumptions common in power system estimation.

To help address these challenges, we present a new \textit{sample complexity} analysis method for distribution topology learning, a well-studied inference task in power engineering; see \cite{deka_learning_2024} for a comprehensive review. Sample complexity refers to the \textit{number of measurements needed} to ensure that the \textit{error of an estimated parameter is bounded} by a chosen tolerance \cite{vapnik2013nature}. In particular, this work provides such topology learning error bounds under the effects of practically relevant communication non-idealities. 

To the knowledge of the authors, these contributions are the first of their kind and expand upon recent deterministic studies of measurement requirements for topology learning \cite{rin2025electricgridtopologyadmittance}. Concretely, we address the following question:
\begin{quote}
    \textit{Consider an $n$-node distribution network with unknown topology (connectivity and parameters). Suppose that we collect measurements at every node with a uniform quantization bin width $\Delta>0$.  How many samples per node $s = m/n$ are needed to recover the topology of the network from the $m=sn$ quantized measurements, up to a desired error tolerance?}
\end{quote}
This research question is illustrated in Fig.~\ref{fig:topology-learning-highlevel}  (right), outlined in Section~\ref{sec:form}. We give a precise answer in Section~\ref{sec:results} by providing a bound on the error of a topology estimate given a prescribed number of measurements.

\begin{figure}[t]
    \centering
    \includegraphics[width=0.5\linewidth]{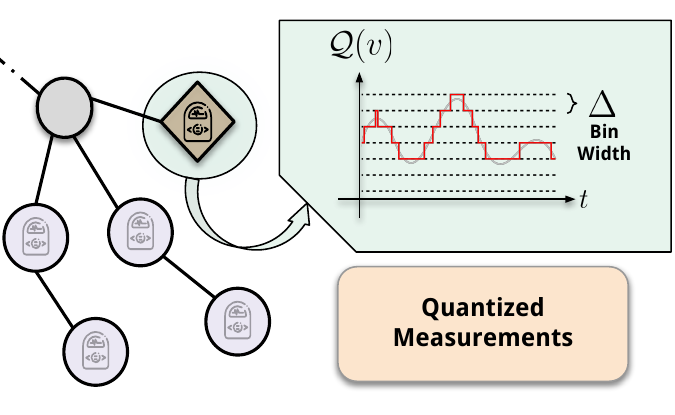}\hfill
    \includegraphics[width=0.47\linewidth]{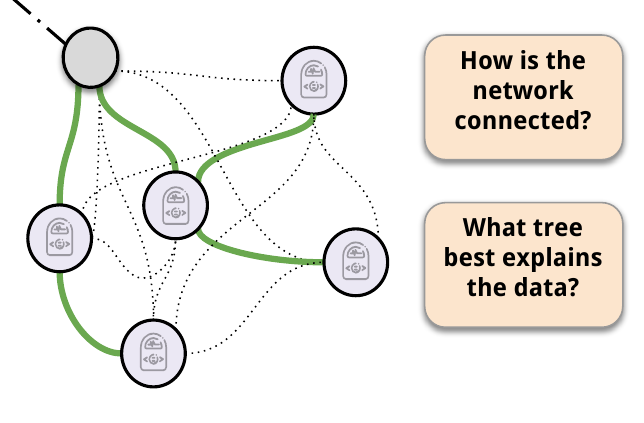}
    \caption{Conceptual illustration of the proposed problem. Left: Distribution network smart meter measurements have realistic quantization noise corresponding to \emph{bin width} $\Delta>0$. Right: Estimate the topology \textit{and} parameters of the distribution network (green, solid lines).}
    \label{fig:topology-learning-highlevel}
\end{figure}


\section{Problem formulation}
\label{sec:form}


\vspace{-0.05em}
\subsection{Communication model}
\label{sec:measure}

We wish to recover an unknown vector of line parameters $\vw_\star \in \R^d$ from $m \geq d$ \textit{quantized measurements} of the form
\begin{equation}
    \label{eq:quant_meas}
    p_i \triangleq \loQ(\ip{\va_i}{\vw_\star} ), \quad i=1,\ldots,m, 
\end{equation}
where $\loQ : \R \to \R$ is a nonlinear \emph{quantization function} and $\p{p_i,\va_i}$, $i=1,\ldots,m$ are measurements collected from smart meters (both spatially \textit{and} temporally). Vectors $\c{\va_i}$ are rows of a particular sensing matrix $\mA \in \R^{m \times d}$ and contain pairwise voltage differences measured co-temporally with active power injection $p_i$. We will define matrix $\mA$ explicitly in Section \ref{sec:form_learning} as a consequence of linearizing the power flow equations. In the same way, we will argue that the sparsity pattern of $\vw_\star$ represents the network topology (cf.  \cite[Sec. IV-B]{deka_learning_2024}).

Concretely, we focus on the setting in which coarsely quantized measurements \eqref{eq:quant_meas} are generated from a uniformly dithered quantization function (cf. \cite{gray_quantization_survey_1998,thrampoulidis_quantized_lasso_2020}) with bin width $\Delta >0$. These measurements take the form
\begin{equation}
    \label{eq:unif_quant_measmt}
    p_i = \Delta\cdot\p{\floor{\frac{\ip{\va_i}{\vw_\star} + \tau_i}{\Delta}} + \frac{1}{2}} \quad i=1,\ldots,m,
\end{equation}
where $\tau_i \sim \dUnif\p{-\frac{\Delta}{2},\frac{\Delta}{2}}$, and $\Expec[]{p_i}=\ip{\va_i}{\vw_\star}$. The \textit{dither} $\tau_i$ is a purposely applied random noise component that is generated by a sensor and added to an input signal prior to
quantization. This technique is rather well-established and commonly used by sensors in practice, and by statisticians in theory; see \cite{gray_dithered_quantizer_1993,gray_quantization_survey_1998}.

\vspace{-1em}
\subsection{Statistical tools}
\label{sec:form:math}
Suppose that $\vw_\star$ lies in a convex constraint set $\setK \subseteq \R^d$ that encodes some sort of \emph{known structure} of the parameter $\vw_\star$, e.g., knowledge of where certain lines are located (but not their switching status), or whether the network is operated radially. We can solve for a constrained estimate $\hat\vw \in \R^d$ that obeys this structure via the program \eqref{eq:g-lasso}, known as the \textit{generalized LASSO}:
\begin{equation}
\label{eq:g-lasso}
    \hat{\vw} = \argmin_{\vw \in \setK} \frac{1}{2m} \sum_{i=1}^m \p{\ip{\va_i}{\vw} - p_i}^2.
\end{equation}
The quality of the estimate produced by the program \eqref{eq:g-lasso} can be quantified with a statistical tool known as the \textit{Gaussian width}. We introduce this tool briefly and refer the reader to \cite{chandrasekaran_convex_2012}, \cite[Ch. 7.5]{Vershynin_2018} and \cite{plan_generalized_2016,thrampoulidis_quantized_lasso_2020} for further technical information. Formally, the Gaussian width $\omega(\setT)$ of a set $\setT \subseteq \R^d$ is defined as
\begin{equation}
\omega(\setT) \triangleq \Expec[]{\sup_{\vu\in \setT} \ip{\vu}{\vg}},
\end{equation}
where $\vg \sim \setN(0,\mId)$ is a vector of i.i.d. standard Gaussians. The Gaussian width $\omega$ is a useful tool for predicting the behavior of structured convex problems with random input data \cite{chandrasekaran_convex_2012}. It represents the ``size'' of a standard Gaussian process over $\setT$, which we can use as a stochastic \textit{comparison} for our structured problem. Two classic examples of structured constraint sets with well-behaved Gaussian widths are sparse and low-rank constraint sets \cite{chandrasekaran_convex_2012}.

The squared Gaussian width, $\omega^2(\setT)$, provides a measure of the effective dimension of the set $\setT$; informally, this represents the minimum number of orthogonal directions required to capture the variation or structure within $\setT$. We will also need another object for our analysis\textemdash the \textit{tangent cone} of a set $\setK \subset \R^d$ at $\vx \in \R^d$, which is defined as
\begin{equation}
\label{eq:tangent-cone}
\setD(\setK,\vx) \triangleq \c{\lambda \vu \: : \: \lambda \geq 0, \, \vu \in \setK - \vx},
\end{equation}
where $\setK-\vx$ is the set $\setK$ translated by $\vx$. The tangent cone \eqref{eq:tangent-cone} generalizes the idea of the tangent space to a nonlinear surface that may have non-differentiable points.

\subsection{Network and power flow model}
\label{sec:form_network_and_pf}
The abstract quantities in \eqref{eq:quant_meas} and \eqref{eq:unif_quant_measmt} can be applied to the Linear Coupled Power Flow (LCPF) model \cite{deka_learning_2024,bolognani_fast_2015}. As the focus of this work is on sample complexity, we make the following simplifying assumption on the reactive power.
\begin{assumption}[Fixed power factor]
    \label{assum:fixed_pf}
    Let $\vp \in \R^n$ be active power injections.
    Assume that reactive power injections satisfy $\vq = \kappa \vp$, where $\kappa \in \R$ is a known constant.
\end{assumption}
Assumption \ref{assum:fixed_pf} is equivalent to defining a fixed power factor $\phi \in (0,1]$ such that $\kappa = \pm \phi^{-1}\sqrt{1-\phi^2}$, where $\sgn(\kappa) = 1$ if the injections are inductive and $\sgn(\kappa)=-1$ if the injections are capacitive. For more information on the consequences of this assumption, see \cite{stanojev_tractable_2023,arghandeh_chapter_2024}.

\subsubsection{Linear Coupled Power Flow (LCPF) model}

For a radial network with $n+1$ nodes, let $\mC \in \c{-1,0,1}^{n \times n}$ be the invertible, reduced, branch-to-node incidence matrix of a tree graph with the column corresponding to the slack node removed. Let $\c{\p{\vv_t,\vp_t}}_{t=1}^m$ be a sequence of nodal voltage magnitude and active power measurements. Under Assumption \ref{assum:fixed_pf}, the voltage magnitudes satisfy~\cite{deka_learning_2024,bolognani_fast_2015}:
\begin{equation}
\vv_t - \vone = \p{\mR + \mX \cdot \kappa} \vp_t, \quad t=1,\ldots,m
\end{equation}
where $\mR + \mX  \kappa \triangleq \mZ$ is the (reduced) \emph{equivalent impedance matrix}. We can write $\mZ$ as the inverse of a \textit{real-valued} equivalent admittance matrix $ \mY \in \R^{n \times n}$, where
\begin{equation}
\mZ = \mC^{-1} \diag\p{\vr + \kappa \vx} \mC^{-\T}\triangleq \mY^{-1},
%
%
\end{equation} 
where $\vz \triangleq \vr + \kappa \vx \in \R^{n}$ are the line impedances (scaled). The matrix $\mC^{-1}  \in \c{-1,0}^{n \times n}$ is a lower triangular matrix where the non-zero entries of each column $j$ represent the descendants of node $j$ in the tree. In particular, over the indices $i,j=1,\ldots,n$, we have $\p{\mC^{-1}}_{ij} =-1$ if $i=j$ or $i$ is a descendant of $j$, and $\p{\mC^{-1}}_{ij}=0$ otherwise.




\subsubsection{Lifting to the complete graph}
\newcommand{\mCtilde}{\mtx{\tilde{C}}}
\newcommand{\mYtilde}{\mtx{\tilde{Y}}}
 
 Echoing \cite{rin2025electricgridtopologyadmittance}, we can view the problem of recovering $\vw_\star$ as \textit{sparsifying} a complete undirected graph $K_{n+1}$ with $n+1$ nodes and ${n+1 \choose 2}$ lines. Accordingly, let $\mYtilde(\cdot)$ be the \textit{admittance matrix operator} that sends line parameters $\vw \in \R^{n +1 \choose 2}$ to a corresponding (non-reduced) $(n+1) \times (n+1)$ admittance matrix: 
    \begin{equation}
    \mYtilde(\vw) = \mCtilde^\T \diag(\vw) \mCtilde 
    = \sum_{i=1}^{n} \sum_{j=i+1}^{n} \mE_{ij}w_{ij}.
    \end{equation}
    Here,  $\mE_{ij} \triangleq \p{\ve_i - \ve_j}\p{\ve_i - \ve_j}^\T$ is an \textit{elementary Laplacian matrix}, where $\ve_i$ is the $i$-th standard basis vector in $\R^{n+1}$. 
    The matrix $\mCtilde \in \c{-1,0,1}^{{n +1\choose 2} \times n+1}$ is the incidence matrix that corresponds to a complete graph. We can now represent the set of all radial networks as the set of all line parameter vectors $\vw$ that correspond to a connected radial network with $n+1$ nodes. This set takes the form
    \begin{equation}
    \label{eq:spanning_tree_frame}
    \setR = \c{ \vw \in \R^{n+1 \choose 2} \: : \: \norm{\vw}_0 = n, \ \ \lambda_2(\mYtilde(\vw))>0}.  
    \end{equation}
The sparsity condition $\norm{\vw}_0=n$ ensures $n$ lines, and the condition on the second smallest eigenvalue $\lambda_2(\mYtilde(\vw))>0$ means the graph is connected, i.e., $\vx^\T \mYtilde(\vw)\vx > 0$ for all $\vx \perp \vone$. The constraint set \eqref{eq:spanning_tree_frame} is intractable; however, in Section \ref{sec:form_learning} we provide a relaxation that performs well both experimentally and theoretically. Hereafter, we implicitly remove the slack node column of $\mCtilde$ in our calculations, and refer to the network as having $n$ nodes.



%
%
%
%

\section{Main result}
\label{sec:results}

    \subsection{Topology learning problem}
    \label{sec:form_learning}

    Let $\mP,\mV \in \R^{n \times s}$ be data matrices whose columns are $s$ samples of active power and voltage magnitude measurement vectors, respectively, across all $n$ nodes. We seek to recover an $n$-sparse vector
         $\vw_\star$ of line parameters, where $w_{\star,i}=\nicefrac{1}{z_{\star,i}}$ if $z_{\star,i} \neq0$, and $0$ otherwise. This problem can be written as a sparse recovery problem over the \emph{complete graph} by defining the measurement system
          \begin{equation}
    \label{eq:topology_est_complete_graph}
         \vp = \setQ\p{\operatorname{\sf vec}\p{\mP}} = \setQ\p{\mA \vw_\star}, \quad \vp \in \R^{sn}.
    \end{equation}
    The $sn \times {n+1 \choose 2}$ sensing matrix is 
    $
    \mA \triangleq  \mV^\T\mCtilde^\T \odot \mCtilde^\T,
    $
     where $\odot$ is the Khatri-Rao matrix product. 
        

\vspace{-0.25cm}
\subsection{Bounding the error of the parameter estimate}
\label{sec:results_theory}

    Taking advantage of the sparsity, we can characterize the Gaussian width of the tree set \eqref{eq:spanning_tree_frame}.
    \begin{lemma}
        \label{lemma:gaussian_width_n_sparse}
     Define the set $\setT_{\setR,\vw_\star} \triangleq \setD(\setR,\vw_\star) \cap \mathbb{S}^{d-1}$ as the intersection of the tangent cone \eqref{eq:tangent-cone} of constraint set $\setR$ with the unit shell $\mathbb{S}^{d-1} \triangleq \c{\vu \in \R^d \: : \: \norm{\vu}_2 =1}$, $d={n+1 \choose 2}$. The squared Gaussian width of the set of radial networks \eqref{eq:spanning_tree_frame} satisfies $\omega^2\p{\setT_{\setR,\vw_\star}} \leq 2n\log\p{\frac{n+1}{2}} + \frac{3}{2}n.$
    \end{lemma}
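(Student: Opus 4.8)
I will state my plan as if writing the proof; the final estimate I would either cite or derive from the sharp order-statistic behavior of a Gaussian vector, and that estimate is the only delicate point.

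The plan is to discard the connectivity condition (it only removes points from $\setR$, hence can only shrink the cone) and to exploit that $\vw_\star$ is exactly $n$-sparse. Write $S \triangleq \operatorname{supp}(\vw_\star)$, so $|S| = n$ and $S^c$ indexes the remaining $d - n = \binom{n}{2}$ coordinates. Since $\setR \subseteq \{\vw : \norm{\vw}_0 \le n\}$ and $\vw_\star$ lies in both, monotonicity of the tangent-cone construction gives $\setD(\setR,\vw_\star) \subseteq \setD(\{\vw : \norm{\vw}_0 \le n\},\vw_\star)$. First I would show that every $\vu$ in the latter cone has $\vu|_{S^c}$ of sparsity at most $n$: any such $\vu$ equals $\lambda(\vw - \vw_\star)$ for some $\lambda \ge 0$ and some $\vw$ with $\norm{\vw}_0 \le n$, and because $\vw_\star$ vanishes on $S^c$ we get $\vu|_{S^c} = \lambda\,\vw|_{S^c}$, which inherits sparsity at most $n$ from $\vw$. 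Hence $\setT_{\setR,\vw_\star} \subseteq \big(\R^{S} \oplus \setZ_n\big) \cap \mathbb{S}^{d-1}$, where $\setZ_n \triangleq \{\vz \in \R^{S^c} : \norm{\vz}_0 \le n\}$. (One should not instead try to sandwich $\setD(\setR,\vw_\star)$ in the descent cone of $\norm{\cdot}_1$ at $\vw_\star$ and quote the classical width bound there: that inclusion needs $\norm{\vw}_1 \le \norm{\vw_\star}_1$ for all radial $\vw$, which is not assumed, whereas the $\ell_0$ route above does not.)

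Next I would compute the Gaussian width of the product cone. Since $S$ and $S^c$ are disjoint coordinate blocks, for $\vg \sim \setN(0,\mId_d)$ the defining supremum separates:
\begin{equation*}
\sup_{\vu \in (\R^{S} \oplus \setZ_n) \cap \mathbb{S}^{d-1}} \ip{\vu}{\vg} = \sqrt{\norm{\vg_S}_2^2 + \norm{(\vg_{S^c})_{(n)}}_2^2},
\end{equation*}
where $(\vg_{S^c})_{(n)}$ retains the $n$ largest-magnitude entries of $\vg$ on $S^c$; here one uses $\sup_{\norm{\va}_2 \le r} \ip{\va}{\vg_S} = r\norm{\vg_S}_2$, the analogous identity $\sup_{\vz \in \setZ_n,\, \norm{\vz}_2 \le \sqrt{1-r^2}} \ip{\vz}{\vg_{S^c}} = \sqrt{1-r^2}\,\norm{(\vg_{S^c})_{(n)}}_2$, and then optimizes the split $r \in [0,1]$. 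Taking expectations, applying Jensen's inequality, and using $\Expec[]{\norm{\vg_S}_2^2} = n$,
\begin{equation*}
\omega^2(\setT_{\setR,\vw_\star}) \le n + \Expec[]{\norm{(\vg_{S^c})_{(n)}}_2^2}.
\end{equation*}
It thus remains to establish $\Expec[]{\norm{(\vg_{S^c})_{(n)}}_2^2} \le 2n\log\tfrac{n-1}{2} + \tfrac12 n$; since $|S^c|/n = (n-1)/2 \le (n+1)/2$ and $\log$ is increasing, this yields $\omega^2(\setT_{\setR,\vw_\star}) \le 2n\log\tfrac{n+1}{2} + \tfrac32 n$.

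The main obstacle is precisely this last estimate with the sharp additive constant $\tfrac12 n$ — equivalently, the sharp effective dimension of the set of $n$-sparse vectors in $\R^{\binom{n}{2}}$. Crude tools fall just short: a union bound over the $\binom{|S^c|}{n}$ supports combined with Gaussian concentration of $\vz \mapsto \norm{\vz}_2$ controls $\Expec[]{\norm{(\vg_{S^c})_{(n)}}_2}$, but squaring that bound leaves a cross term of order $n\sqrt{\log n}$; a direct Chernoff/moment-generating-function bound on $\norm{(\vg_{S^c})_{(n)}}_2^2$ leaves a $\Theta(n \log\log n)$ slack. The remedy is the finer behavior of the top order statistics, $\Expec[]{(g^*_i)^2} \le 2\log(|S^c|/i) - c$ for a fixed $c>0$ uniformly over $i \le n \ll |S^c|$, which, summed against $\sum_{i=1}^{n} 2\log(|S^c|/i) = 2n\log|S^c| - 2\log(n!)$ and combined with $\log(n!) \ge n\log n - n$, converts the naive ``$+2n$'' into the claimed ``$+\tfrac12 n$'' (or less). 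I would obtain this either by citing the corresponding sharp Gaussian-width / statistical-dimension bound for $\ell_0$-structured cones from the sparse-recovery literature \cite{chandrasekaran_convex_2012,Vershynin_2018}, or by proving the order-statistic estimate directly; with that in hand, the remaining steps — the two cone inclusions, the orthogonal split, Jensen, and monotonicity of $\log$ — are routine bookkeeping, and the dropped constraint $\lambda_2(\mYtilde(\vw)) > 0$ only helps.
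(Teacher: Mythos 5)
Your route is genuinely different from the paper's. The paper relaxes $\setR$ to the $\ell_1$-ball $\setK=\{\vw:\norm{\vw}_1\le\norm{\vw_\star}_1\}$ and quotes the classical bound $2s\log(d/s)+\tfrac32 s$ for the $\ell_1$ descent cone at an $s$-sparse point, with $s=n$ and $d=\binom{n+1}{2}$, so that $d/s=\tfrac{n+1}{2}$ gives the stated constants immediately. Your objection to that step is well taken: $\setR$ is not contained in that ball (e.g.\ $\vw=2\vw_\star$ is radial, so $\vw_\star\in\setD(\setR,\vw_\star)$, yet $\vw_\star$ is an ascent direction for $\norm{\cdot}_1$), so the paper's claimed inclusion $\setK\supseteq\setR$ does not hold as written. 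Your $\ell_0$ relaxation, the inclusion $\setT_{\setR,\vw_\star}\subseteq(\R^{S}\oplus\setZ_n)\cap\mathbb{S}^{d-1}$, the split of the supremum across the two coordinate blocks, and the Jensen step are all correct.

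However, there is a genuine gap exactly where you flag it: the estimate $\E[\norm{(\vg_{S^c})_{(n)}}_2^2]\le 2n\log\tfrac{n-1}{2}+\tfrac12 n$ is neither proved nor straightforwardly citable. The sharp additive constants in the literature (the $\tfrac32 s$ or $\tfrac54 s$ terms) are derived for the $\ell_1$ descent cone, which is a different cone from $\R^{S}\oplus\setZ_n$ (neither contains the other), so they do not transfer; the standard off-the-shelf bound for $\ell_0$-structured sets is of the form $2s\log(ed/s)=2s\log(d/s)+2s$, which in your decomposition yields $n+2n\log\tfrac{n-1}{2}+2n=2n\log\tfrac{n-1}{2}+3n$, and this exceeds the target $2n\log\tfrac{n+1}{2}+\tfrac32 n$ for every $n\ge 3$. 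Your intended remedy via a uniform correction $\E[(g_i^*)^2]\le 2\log(|S^c|/i)-c$ is plausible only in the regime $n\ll|S^c|$ and is false for small $n$: at $n=3$ one has $|S^c|=3$, so $\E[\norm{(\vg_{S^c})_{(3)}}_2^2]=3$, while your claimed bound evaluates to $\tfrac32$. (The lemma itself still holds there, but your intermediate inequality does not.) A soft-thresholding argument, $\sum_{i\le n}(g_i^*)^2\le nt^2+\sum_{i}(g_i^2-t^2)_+$ with $t^2=2\log(|S^c|/n)$, gives an additive term of order $n/\sqrt{\log((n-1)/2)}$, which drops below $\tfrac12 n$ only for fairly large $n$. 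In short, everything you call ``routine bookkeeping'' is indeed fine, but the one step you defer is precisely the analytic content that the $\ell_1$-descent-cone computation packages, and without it the claimed constants are not established.
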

    \begin{proof}
        The ground truth $\vw_\star$ corresponds to a spanning tree of the complete graph with $n+1$ nodes. Since this tree must have $n$ edges, vector $\vw_\star$ must be $n$-sparse, i.e., $\norm{\vw_\star}_0 = n$. Dropping connectivity, define the relaxation $\setK \supseteq \setR$ as
        \begin{equation}
        \label{eq:relaxed_constraint_set}
        \setK \triangleq \c{\vw \in \R^{{n +1\choose 2}} : \norm{\vw}_1 \leq \norm{\vw_\star}_1},
        \end{equation}
        i.e., the $\ell_1$-norm ball of radius $\norm{\vw_\star}_1$.
        It is well known (cf. \cite{chandrasekaran_convex_2012,thrampoulidis_quantized_lasso_2020}) that, as $\vw_\star$ is $n$-sparse, we can conclude that $\omega^2(\setT_{\setR,\vw_\star}) \leq \omega^2(\setT_{\setK,\vw_\star}) \leq 2n \log(\frac{n+1}{2}) + \frac{3}{2}n$.
    \end{proof}
    Using Lemma \ref{lemma:gaussian_width_n_sparse}, the error of the topology learned by \eqref{eq:g-lasso} with constraint set \eqref{eq:relaxed_constraint_set} and bin width $\Delta >0$ is bounded and can be quantified.
    
\begin{theorem}
    \label{thm:error_bound_unif_quantization}
    Suppose that $s = \frac{m}{n}$ samples with bin width $\Delta >0$ are collected throughout an $n$-node distribution network, and suppose that the $\c{\va_i}$ are any i.i.d. sub-Gaussian random vectors. Then, there exist constants $C,c_1,c_2>0$ such that the error of the topology estimate from solving \eqref{eq:g-lasso}, with probability at least 0.99 (i.e., 99\%) is bounded as
    \begin{equation}
    \label{eq:topology_error_bound}
    \norm{\hat{\vw} - \vw_\star}_2 \leq C\Delta \cdot \sqrt{\frac{2\log\p{\frac{n+1}{2}} + \nicefrac{3}{2} }{s}},
    \end{equation}
    provided the number of samples $s \geq \setO(\log n).$
\end{theorem}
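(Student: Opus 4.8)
The plan is to treat \eqref{eq:g-lasso} as a generalized LASSO with a \emph{linear} measurement model corrupted by bounded, conditionally mean-zero noise, and then invoke the standard Gaussian-width recovery guarantees for convex constraint sets (cf.\ \cite{plan_generalized_2016,thrampoulidis_quantized_lasso_2020} and \cite[Ch.~9]{Vershynin_2018}), closing with the width bound of Lemma~\ref{lemma:gaussian_width_n_sparse}. First I would isolate the \emph{effective noise} $\xi_i \triangleq p_i - \ip{\va_i}{\vw_\star}$. The dither identity $\Expec[]{p_i} = \ip{\va_i}{\vw_\star}$ from \eqref{eq:unif_quant_measmt} shows that, conditionally on $\va_i$, the randomness of $\tau_i$ makes $\xi_i$ mean-zero, while the definition of the quantizer forces $|\xi_i| \le \tfrac{3}{2}\Delta$ deterministically; hence $\xi_i$ is sub-Gaussian with $\norm{\xi_i}_{\psi_2} \le c_0\Delta$ and $\Var[]{\xi_i} \le c_0'\Delta^2$. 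In vector form, $\vp = \mA\vw_\star + \vxi$ with $\vxi$ conditionally centered, $\setO(\Delta)$-sub-Gaussian, and independent across samples given $\mA$.

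Next I would derive the usual \emph{basic inequality}. Since $\hat{\vw}$ minimizes \eqref{eq:g-lasso} and $\vw_\star$ is feasible for the relaxed $\ell_1$-ball $\setK$ in \eqref{eq:relaxed_constraint_set}, expanding $\tfrac{1}{2m}\norm{\mA\hat{\vw}-\vp}_2^2 \le \tfrac{1}{2m}\norm{\mA\vw_\star-\vp}_2^2$ around $\vh \triangleq \hat{\vw}-\vw_\star$ gives $\tfrac1m\norm{\mA\vh}_2^2 \le \tfrac2m\ip{\mA\vh}{\vxi}$; by definition \eqref{eq:tangent-cone}, $\vh \in \setD(\setK,\vw_\star)$, so $\vh/\norm{\vh}_2 \in \setT_{\setK,\vw_\star} \triangleq \setD(\setK,\vw_\star)\cap\mathbb{S}^{d-1}$. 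I would then control the two sides separately. For the left side, the matrix deviation inequality (Gordon's escape through the mesh) for i.i.d.\ sub-Gaussian rows yields a constant $c_1>0$ with $\tfrac1m\norm{\mA\vu}_2^2 \ge c_1^2$ uniformly over $\setT_{\setK,\vw_\star}$ on an event of probability $\ge 0.995$, provided $m \ge C_0\,\omega^2(\setT_{\setK,\vw_\star})$; hence $\tfrac1m\norm{\mA\vh}_2^2 \ge c_1^2\norm{\vh}_2^2$. For the right side, conditioning on $\mA$ turns $\vu\mapsto\tfrac1m\ip{\mA\vu}{\vxi}$ into a sub-Gaussian process on $\setT_{\setK,\vw_\star}$, and Talagrand's comparison / generic chaining (as used in \cite{plan_generalized_2016,thrampoulidis_quantized_lasso_2020}) gives $\sup_{\vu\in\setT_{\setK,\vw_\star}}\tfrac1m\ip{\mA\vu}{\vxi} \le c_2\Delta\,\omega(\setT_{\setK,\vw_\star})/\sqrt{m}$ on an event of probability $\ge 0.995$.

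Combining these on the intersection of the two events (probability $\ge 0.99$) gives $c_1^2\norm{\vh}_2^2 \le 2c_2\Delta\,\omega(\setT_{\setK,\vw_\star})\norm{\vh}_2/\sqrt m$, i.e.\ $\norm{\hat{\vw}-\vw_\star}_2 \le C\Delta\,\omega(\setT_{\setK,\vw_\star})/\sqrt{m}$ with $C = 2c_2/c_1^2$. Inserting the width estimate $\omega(\setT_{\setK,\vw_\star}) \le \sqrt{n}\,\sqrt{2\log\p{\tfrac{n+1}{2}}+\tfrac{3}{2}}$ from Lemma~\ref{lemma:gaussian_width_n_sparse} and $m = sn$ collapses the factor $n$ and yields exactly \eqref{eq:topology_error_bound}. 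Finally, the sampling requirement $m \ge C_0\,\omega^2(\setT_{\setK,\vw_\star})$ reads $sn \ge C_0\,n\p{2\log\p{\tfrac{n+1}{2}}+\tfrac{3}{2}}$, which is the stated condition $s \ge \setO(\log n)$.

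The main obstacle is the noise-interaction bound: because each $\xi_i$ is a deterministic function of both $\va_i$ and the dither $\tau_i$, the vector $\vxi$ is \emph{not} independent of $\mA$, so an off-the-shelf sub-Gaussian-process inequality does not apply directly. The remedy is to condition on $\mA$, exploit that the leftover randomness in $\{\tau_i\}$ keeps $\vxi$ conditionally centered and deterministically bounded by $\tfrac{3}{2}\Delta$, apply the chaining bound uniformly over the tangent cone at fixed $\mA$, and only then average over $\mA$; keeping the $\setO(\Delta)$ factor explicit through this step (rather than folding it into an anonymous noise level) is what makes the bin-width dependence in \eqref{eq:topology_error_bound} come out cleanly. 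A minor point to track is that $\setK$ is the $\ell_1$-relaxation introduced in the proof of Lemma~\ref{lemma:gaussian_width_n_sparse}, so the width appearing above is $\omega(\setT_{\setK,\vw_\star})$ — already bounded there — rather than $\omega(\setT_{\setR,\vw_\star})$.
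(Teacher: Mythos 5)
Your proposal is correct and follows essentially the same route as the paper: the paper simply invokes Theorem~III.1 of \cite{thrampoulidis_quantized_lasso_2020} as a black box to get $\norm{\hat\vw - \vw_\star}_2 \leq C\Delta\,\omega(\setT_{\setK,\vw_\star})/\sqrt{m}$ under the sampling condition $m \geq c_1\omega^2(\setT_{\setK,\vw_\star})+c_2$, and then substitutes Lemma~\ref{lemma:gaussian_width_n_sparse} and $m=sn$ exactly as you do. The additional material in your write-up (the conditionally centered, $\setO(\Delta)$-bounded dither noise, the basic inequality, the restricted-eigenvalue and chaining steps, and the conditioning on $\mA$ to decouple $\vxi$) is a faithful reconstruction of the internals of that cited theorem rather than a different argument, and is consistent with the paper's proof.
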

\begin{proof}

    Combining Lemma \ref{lemma:gaussian_width_n_sparse} with \eqref{eq:glasso_error_bound} and applying \cite[Thm. III.1]{thrampoulidis_quantized_lasso_2020}, there exists a constant $C>0$ such that the minimizer $\hat\vw$ of the program \eqref{eq:g-lasso} satisfies, with probability at least 0.99,\begin{subequations}
    \label{eq:glasso_error_bound}
    \begin{align}
        \norm{\hat{\vw} - \vw_\star}_2 &\leq C  \Delta \cdot \frac{\omega\p{\setT_{\setK,\vw_\star}}}{\sqrt{m}},\\
        &\leq C\Delta \cdot \sqrt{\frac{2\cancel{n}\log(\frac{n+1}{2}) +\frac{3}{2}\cancel{n}}{s\cancel{n}}},
    \end{align}
\end{subequations}
provided the total number of measurements $m = sn \geq c_1\omega^2\p{\setT_{\setK,\vw_\star}} +c_2$ for some absolute constants $c_1,c_2>0$. Due to Lemma \ref{lemma:gaussian_width_n_sparse}, if the number of samples per node $s \geq c_1(2\log\p{\frac{n+1}{2}} + \nicefrac{3}{2})+c_2=\setO(\log n)$ we satisfy this measurement requirement.
\end{proof}
%


\begin{figure}[t]
    \centering
    \includegraphics[width=0.95\linewidth]{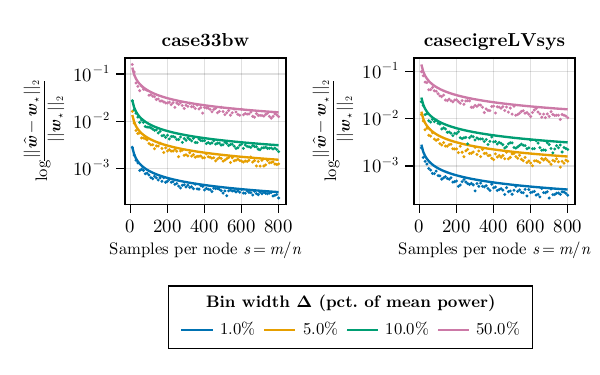}
    \caption{Relative error vs. number of samples for repeated runs of the topology learning program \eqref{eq:g-lasso} on the Baran \& Wu (left) and CIGRE low-voltage (right) test cases. The scatter markers show the experimental error obtained for one run of \eqref{eq:g-lasso}, where each color corresponds to a bin width $\Delta$. The solid lines show the error prescribed by \eqref{eq:topology_error_bound}.}
    \label{fig:quantized_est}
\end{figure}

\vspace{-1.5em}
\subsection{Numerical results}\label{sec:numerics}
Fig.~\ref{fig:quantized_est} compares the error bound~\eqref{eq:topology_error_bound} with the parameter error obtained by numerically solving many instances of the program~\eqref{eq:g-lasso}. The errors are plotted against the number of samples per node, varied over 100 discrete uniformly spaced points $10 \leq s \leq 800$. For each instance of the program~\eqref{eq:g-lasso}, a random matrix $\mV\in\R^{n\times s}$ of independent Gaussian voltages is generated, where the mean of each column is the AC power flow solution of the feeder, and the variance of each column entry is 10\% thereof. Active power measurements are then generated as in~\eqref{eq:topology_est_complete_graph}. The quantization bin width $\Delta$ is varied as a percentage of the sample mean of the absolute active power injections $\frac{1}{m}\norm{\mA \vw_\star}_1$. The solid curves in Fig.~\ref{fig:quantized_est} depict the error bounds given by~\eqref{eq:topology_error_bound}, for each bin width $\Delta$. 

Following \cite{thrampoulidis_quantized_lasso_2020}, the constant $C$ is not specified by the theorem. 
Therefore, we numerically select $C$ for each feeder by computing the maximum ratio of empirical errors and~\eqref{eq:topology_error_bound} over all $s,\Delta$. We found $C \approx 13$ for \texttt{case33bw} and $C \approx 10$ for \texttt{casecigreLV}, which we use for the solid curves in Fig.~\ref{fig:quantized_est}. The scaling $\setO(\Delta \sqrt{\nicefrac{\log n}{s}})$ predicted by~\eqref{eq:topology_error_bound} matches the experimental errors across all $\Delta$.

\vspace{-0.6em}
\section{Conclusion}
\label{sec:conclusion}

    

We provided a framework to \textit{predict the error of a learned distribution network topology\textemdash before solving any optimization problem.} By exploiting the radial structure, we achieved an error \textit{sublinear} in the number of nodes, up to constant factors determined by the precision of the sensor and the underlying probability distributions. Formally, we showed that the topology \textit{and} line parameters of a radial network can be recovered up to a relative error of $\setO(\Delta\sqrt{\nicefrac{\log n}{s}})$, given a collection of $\setO(\log n)$ samples from every node with quantization bin width $\Delta>0$. Thus, before collecting data, one can size quantization and sampling budgets to hit a target topology error. Note that~\eqref{eq:glasso_error_bound} is determined by the bin width $\Delta$. Deriving bounds involving communication bit rates would require assumptions on the variance of the underlying measurement being quantized.


\bibliographystyle{ieeetr}
{\footnotesize
\bibliography{Refs}

\begin{thebibliography}{10}

\bibitem{smart_meter_pinging}
C.~Huang {\em et~al.}, ``{Smart Meter Pinging and Reading Through AMI Two-Way Communication Networks to Monitor Grid Edge Devices and DERs},'' {\em IEEE Trans. Smart Grid}, pp.~4144--4153, 2022.

\bibitem{rangarajan_gupta_molzahn_roald-forecast_dsse}
A.~Rangarajan, R.~K. Gupta, D.~K. Molzahn, and L.~A. Roald, ``{Forecast-Aided State Estimation in Unbalanced Distribution Networks Using Smart Meter Data under Limited Communication Bandwidth},'' {\em {\rm to appear in} Sustainable Energy, Grids and Networks}, June 2025.

\bibitem{gray_quantization_survey_1998}
R.~Gray and D.~Neuhoff, ``Quantization,'' {\em IEEE Trans. Inf. Theory}, vol.~44, no.~6, pp.~2325--2383, 1998.

\bibitem{gholami2022survey}
A.~Gholami, S.~Kim, Z.~Dong, Z.~Yao, M.~W. Mahoney, and K.~Keutzer, ``{A Survey of Quantization Methods for Efficient Neural Network Inference},'' in {\em {Low-Power Computer Vision}}, pp.~291--326, Chapman and Hall/CRC, 2022.

\bibitem{plan_generalized_2016}
Y.~Plan and R.~Vershynin, ``The {Generalized} {LASSO} {with} {Non}-{Linear} {Observations},'' {\em IEEE Trans. Inf. Theory}, vol.~62, no.~3, pp.~1528--1537, 2016.

\bibitem{thrampoulidis_quantized_lasso_2020}
C.~Thrampoulidis and A.~S. Rawat, ``{The Generalized LASSO for Sub-Gaussian Measurements With Dithered Quantization},'' {\em IEEE Trans. Inf. Theory}, vol.~66, no.~4, pp.~2487--2500, 2020.

\bibitem{deka_learning_2024}
D.~Deka {\em et~al.}, ``Learning {Distribution} {Grid} {Topologies}: {A} {Tutorial},'' {\em IEEE Trans. Smart Grid}, vol.~15, no.~1, pp.~999--1013, 2024.

\bibitem{vapnik2013nature}
V.~Vapnik, {\em {The Nature of Statistical Learning Theory}}.
\newblock Springer Science \& Business Media, 2013.

\bibitem{rin2025electricgridtopologyadmittance}
N.~Rin, I.~Shames, I.~R. Petersen, and E.~L. Ratnam, ``{Electric Grid Topology and Admittance Estimation: Quantifying Phasor-based Measurement Requirements}.'' to appear in the \textit{American Control Conference (ACC)}, 2025.

\bibitem{gray_dithered_quantizer_1993}
R.~Gray and T.~Stockham, ``{Dithered Quantizers},'' {\em IEEE Trans. Inf. Theory}, vol.~39, no.~3, pp.~805--812, 1993.

\bibitem{chandrasekaran_convex_2012}
V.~Chandrasekaran, B.~Recht, P.~A. Parrilo, and A.~S. Willsky, ``The {Convex} {Geometry} of {Linear} {Inverse} {Problems},'' {\em Foundations of Computational Mathematics}, vol.~12, no.~6, pp.~805--849, 2012.

\bibitem{Vershynin_2018}
R.~Vershynin, {\em {High-Dimensional Probability: An Introduction with Applications in Data Science}}.
\newblock Cambridge Series in Statistical and Probabilistic Mathematics, Cambridge University Press, 2025.

\bibitem{bolognani_fast_2015}
S.~Bolognani and F.~Dörfler, ``{Fast Power System Analysis via Implicit Linearization of the Power Flow Manifold},'' in {\em 53rd {Annual} {Allerton} {Conference} on {Communication}, {Control}, and {Computing} ({Allerton})}, pp.~402--409, 2015.

\bibitem{stanojev_tractable_2023}
O.~Stanojev, L.~Werner, S.~Low, and G.~Hug, ``Tractable {Identification} of {Electric} {Distribution} {Networks},'' in {\em 62nd {IEEE} {Conference} on {Decision} and {Control} ({CDC})}, pp.~549--555, 2023.

\bibitem{arghandeh_chapter_2024}
S.~Bolognani and K.~Moffat, ``Chapter {Six} - {Grid} {T}opology {I}dentification via {D}istributed {S}tatistical {H}ypothesis {T}esting,'' in {\em Big {Data} {Application} in {Power} {Systems}}, pp.~109--134, Elsevier, 2nd~ed., 2024.

\end{thebibliography}
}






\end{document}